\newtheorem{theorem}{Theorem}[section]
\newtheorem{lemma}[theorem]{Lemma}
\newtheorem{proposition}[theorem]{Proposition}
\newtheorem{definition}[theorem]{Definition}
\newtheorem{example}{\textbf{Example}}
\newcommand{\Z}{{\mathbb Z}}
\newcommand{\N}{{\mathbb N}}
\newcommand{\ie}{{\em i.e.},\xspace }
\begin{document}

\title{ An Algorithmic Characterization of Polynomial Functions over $\Z_{p^n} $}

\title
{An Algorithmic Characterization of Polynomial Functions over $\Z_{p^n} $}
\author{Ashwin Guha}
\address{Department of Computer Science and Automation \\ Indian Institute of
Science \\ Bangalore 560012, India.}
\email{guha\_ashwin@csa.iisc.ernet.in}

\author[A. Dukkipati]{Ambedkar Dukkipati}
\address{Department of Computer Science and Automation \\ Indian Institute of
Science \\ Bangalore 560012, India.}
\email{ambedkar@csa.iisc.ernet.in}

\begin{abstract}

In this paper we consider polynomial representability of functions defined 
over $\Z_{p^n}$, where $p$ is a prime and $n$ is a positive integer. 
Our aim is to provide an algorithmic characterization that (i) answers the decision 
problem: to determine whether a given function over $\Z_{p^n}$ is 
polynomially representable or not, and (ii) finds the polynomial if it is polynomially representable. 
The previous characterizations given by Kempner (1921)   
and Carlitz (1964) are existential in nature and only lead to 
an exhaustive search method, \ie algorithm with complexity  exponential in size of the input.
Our characterization leads to an algorithm whose running time is linear in size of input. 
We also extend our result to the multivariate case.

\keywords{Finite Rings \and Polynomials \and Functions \and Taylor Series}

\end{abstract}

\maketitle

\section{Introduction}
\label{intro}

The problem of polynomial representability of functions is central to
many branches of mathematics. 
In literature, there have been attempts to represent various functions using 
polynomials and power series.
With the advent of calculus various 
methods were developed to approximate analytic functions using polynomials. An important milestone in this 
regard is the Taylor series, put forth by Brook Taylor. 

It is well known that if the underlying set is a finite field,
every function from the field to itself can be represented as a polynomial. 
The fact that every function over finite fields of the form $\Z_p$, where $p$ is prime, can be represented by a 
polynomial was noted by Hermite \cite{hermite1863fonctions}. 
Dickson  proved the above property 
for a general finite field in \cite{dickson1896analytic}. Dickson also showed that for a finite field of order $q$ every function is 
uniquely determined by a polynomial of degree less than $q$.
Polynomials over finite fields are also discussed in \cite{carlitz1977functions}. 
A comprehensive survey 
regarding finite fields can be found in \cite{lidl1997finite}.

In this paper we consider polynomial representability in $\Z_{p^n}$, where $p$ is a prime and $n$ is a positive
integer. Such residue rings have an elegant
structure and their study is the first step to understand polynomial representability in rings. 
This problem has been studied in literature and the two
important results were given by Carlitz \cite{carlitz1964functions} and Kempner \cite{kempner1921polynomials}.

A necessary and sufficient condition for a function over $\Z_{p^n}$
to be polynomial using Taylor series is provided in \cite{carlitz1964functions}.
Kempner \cite{kempner1921polynomials} showed that the only 
residue class rings where all functions can be represented by polynomials 
 are $\Z_p$, where $p$ is prime.
Kempner also provides a method to enumerate all polynomial functions over $\Z_t$ for any positive 
integer $t$.

A simpler formula to express the number of polynomial functions in $\Z_{p^n}$ is given in \cite{keller1968counting}. 
An alternative formula 
for the same is provided in \cite{mullen1984polynomial}, which is also extended to polynomials in several variables. 
 The formula is generalized over a 
Galois ring in \cite{brawley1992functions}. Some other related work 
can be found in \cite{zhang2004polynomial}.

Until now the problem of polynomial 
representability has been viewed from a traditional standpoint and its 
computational aspects have been ignored. In this paper we give an alternate 
characterization by considering the set of functions over $\Z_{p^n}$ as a 
$\Z_{p^n}$-module. We provide a linear time algorithm that solves the problem 
of polynomial representability and identify the polynomial which corresponds to the 
given function. Further, we give the characterization in the multivariate case.

This paper is organized as follows. In Section~\ref{Preliminaries} we provide 
the background and motivation for a new characterization. 
Section~\ref{IdentifyingPolynomialFunctions} contains our characterization of polynomial 
functions in $\Z_{p^n}$. In Section~\ref{AlgorithmicAspects} we give an algorithm 
based on our characterization. 
We also discuss its correctness and complexity. In Section~\ref{FindingPolyn} we 
determine the polynomial that corresponds to the given function. In Section~\ref{mutivariatecase} we extend the 
characterization to functions in several variables. 
Concluding remarks are provided in Section~\ref{Coda}.


\section{Background and Preliminaries}
\label{Preliminaries}

In this section we look at polynomials over finite rings, 
in particular, polynomials over residue class rings.
Let $t$ be a positive integer. One can easily verify that 
$\Z_t[x]$ is a $\Z_t$-module. Every polynomial over $\Z_t$ defines a function 
from $\Z_t$ to $\Z_t$ by the universal property of polynomial rings. In 
other words, if we allow the indeterminate $x$ to vary over $\Z_t$, 
then each polynomial corresponds to a mapping from $\Z_t$ to $\Z_t$. 
Let $\mathfrak{F}_t$ denote the set of all functions from $\Z_t$ to $\Z_t$. 
$\mathfrak{F}_t \cong (\Z_t)^t$, therefore we represent each element of 
$\mathfrak{F}_t$ as a $t$-tuple $(a_0,a_1,a_2,\ldots,a_{t-1})$, 
which corresponds to the function $f$ with $f(i)=a_i$. 
$\mathfrak{F}_t$ is a $\Z_t$-module of cardinality $t^t$. 

It should be noted that there are infinitely many polynomials in $\Z_t[x]$. 
Let $\mathfrak{P}_t$ denote the set of \emph{distinct} functions produced by $\Z_t[x]$. 
$\mathfrak{P}_t$ is finite and a subset of $\mathfrak{F}_t$. 

\begin{definition}
 A ring $A$ is said to be \textbf{polynomially complete} if every 
function from $A$ to itself can be represented as a polynomial.
\end{definition}

Examples of such rings are $\Z_2, \Z_3, \Z_5$.
In general, $\Z_p$ is polynomially complete, if $p$ is a prime. 
In other words, for prime $p$ we have $\mathfrak{P}_p\:=\:\mathfrak{F}_p$. 
Given any function 
it is possible to construct a polynomial which corresponds to 
that function. This is achieved using Lagrange interpolation which is possible because $\Z_p$ is a field. 
This does not hold for an 
arbitrary integer $t$. 
Polynomially complete 
structures are discussed extensively in \cite{lausch1973algebra}.

Kempner discusses polynomials over $\Z_t$ for any positive 
integer $t$ in \cite{kempner1921polynomials}. Kempner gives a method to compute the cardinality of $\mathfrak{P}_{t}$. The conditions for two 
polynomials to be equal as functions \ie $f(x) \equiv g(x)$ mod $t$ is described using 
the ideas of \emph{signature} and \emph{characteristic} of $t$.
A method to enumerate all distinct polynomial functions is also provided.

Carlitz \cite{carlitz1964functions} proved a key 
result regarding polynomial representation of functions in residue class 
ring modulo prime power. The result is very similar to Taylor series.  The result states that
a function $f$ from $\Z_{p^n}$ to itself is polynomial if and only if there exist 
functions $\Phi_0, \Phi_1,\ldots, \Phi_{n-1}$  over $ \Z_{p^{n}}$ such that for all $x,s \in \Z_{p^{n}}$, we have
\begin{equation}
\label{Carlitzeqn} 
f(x+sp)\equiv \Phi_0(x) + (sp)\Phi_1(x) + \ldots + (sp)^{n-1}\Phi_{n-1}(x) \mbox{ mod }{p^{n}}.
\end{equation}

A key feature, and in a certain respect, a drawback, is that these results 
use existential proofs. The results hinge on the existence of some functions 
satisfying certain properties. The previous works do not address the issue of 
finding the afore mentioned functions. Consequently these results do not lead 
to any constructive method to test whether a given function is polynomial 
representable, hence the results cannot be implemented in computation. 

One can apply a brute-force algorithm using the result by Carlitz, 
by considering all possible functions in 
$\mathfrak{F}_{p^{n}}$ as shown below. 

\begin{algorithmic}
\State \textbf{Input:} $f=(a_0,a_1,\ldots,a_{{p^{n}}-1})$.

\For {all $x$, $s \in \Z_{p^{n}}$}
 
	\For{ all $\Phi_0, \Phi_1, \ldots, \Phi_{n-1} \in \mathfrak{F}_{p^{n}}$}

	    \If {$ f(x+ps) = \Phi_0(x) + (sp)\Phi_1(x) + \ldots + (sp)^{n-1}\Phi_{n-1}(x)$}
		  \State \textbf{Output : } $f$ is polynomial
		  \State exit
	    \Else
		  \State \textbf{Output : } $f$ is not polynomial
	    \EndIf
	\EndFor
\EndFor

\end{algorithmic}

The above algorithm is extremely inefficient.  The cardinality of the ring 
$\Z_{p^{n}}$ is exponential in $p$. $|\mathfrak{F}_{p^{n}}|={p^{n}}^{p^{n}}$ is doubly exponential 
in $p$ making it infeasible to compute. 

We can modify the method in \cite{kempner1921polynomials} to suit our 
problem of testing whether a given function $f$ is polynomial. We can 
evaluate all polynomials in $\mathfrak{P}_{p^{n}}$ and compare it with $f$. 
If $f$ does not match any of the functions in $\mathfrak{P}_{p^{n}}$ we can infer 
that $f$ is not polynomially representable. 
The algorithm is presented below. 

\begin{algorithmic}
\State \textbf{Input:} $f=(a_0,a_1,\ldots,a_{{p^{n}}-1})$.

\For{ all $g \in \mathfrak{P}_{p^{n}}$ }
	\For{ all $x \in \Z_{p^{n}}$ }
	      \If{$f(x)=g(x)$}
		  \State \textbf{Output : } $f$ is polynomial
	      \Else
		  \State \textbf{Output : } $f$ is not polynomial
	      \EndIf
	\EndFor
\EndFor
\end{algorithmic}

This approach is better than the earlier one, 
still it is very inefficient. $\mathfrak{P}_{p^{n}}$, which is much smaller than $\mathfrak{F}_{p^{n}}$, 
is still extremely large. One can very well see that given an arbitrary function in 
$\mathfrak{F}_{p^{n}}$ it is less likely to be polynomially representable than otherwise.
A simple example from \cite{kempner1921polynomials} illustrates the magnitude of the sets involved.

\begin{example}
\label{exone}
Consider $p=3,n=11.$ Then,\\
${p^{n}}=3^{11} \approx 10^5$. \\
$|\mathfrak{F}_{p^{n}}| = 3^{11 \cdot 3^{11}} \approx 10^{1,000,000}$.\\
$|\mathfrak{P}_{p^{n}}| = 3^{3 \cdot 54} \approx 10^{76}$.
\end{example}

We can see that even for small values of $p$ and $n$, 
$\mathfrak{P}_{p^{n}}$ becomes unmanageably large. 

In this paper we provide an algorithm 
 that answers the question posed earlier. 
We present a new characterization to describe polynomial functions over $\Z_{p^n}$  
using which we bring down the complexity of the 
algorithm from doubly exponential in $p$ to exponential in $p$.

\section{Characterization of Polynomial Functions over $\Z_{p^n}$}
\label{IdentifyingPolynomialFunctions}

The question we wish to resolve can be stated as follows:
\begin{quote}
Given a prime $p$ and a positive integer $n$, and a function 
$f: \Z_{p^n} \longrightarrow\Z_{p^n}$, 
is there an algorithm to test whether $f$ is polynomially representable or not? 
\end{quote}

In order to answer the above question we make use of the module structure of $\mathfrak{P}_{p^n}$. 
One can easily verify that for any integer $t$, $\mathfrak{P}_{t}$, the set of polynomial 
functions, is a $\Z_{t}$-submodule of $\mathfrak{F}_t$.

\begin{lemma}
\label{ProposTwo}
 If $(a_0,a_1,\ldots,a_{t-1})$ and $(b_0,b_1,\ldots,b_{t-1})\in \mathfrak{P}_{t}$, then
$(a_0+b_0,a_1+b_1,\ldots,a_{t-1}+b_{t-1}) \in \mathfrak{P}_{t}.$
\end{lemma}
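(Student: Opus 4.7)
The plan is to unpack the definition of $\mathfrak{P}_t$ as the image of the evaluation map $\Z_t[x] \to \mathfrak{F}_t$ sending a polynomial $h(x)$ to the tuple $(h(0), h(1), \ldots, h(t-1))$, and then exploit the fact that $\Z_t[x]$ is closed under addition.

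Concretely, I would proceed as follows. First, by hypothesis there exist polynomials $p(x), q(x) \in \Z_t[x]$ such that $p(i) \equiv a_i \pmod{t}$ and $q(i) \equiv b_i \pmod{t}$ for every $i \in \{0, 1, \ldots, t-1\}$; this is exactly the statement that the two given tuples lie in $\mathfrak{P}_t$. Second, consider the polynomial $r(x) = p(x) + q(x)$, which lies in $\Z_t[x]$ since the polynomial ring is closed under addition. Evaluating at each $i$, ring-homomorphism properties of evaluation give $r(i) = p(i) + q(i) \equiv a_i + b_i \pmod{t}$. Hence the tuple $(a_0+b_0, a_1+b_1, \ldots, a_{t-1}+b_{t-1})$ is the function induced by $r(x)$, so it belongs to $\mathfrak{P}_t$.

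There is no real obstacle here; the entire content is the observation that evaluation at the elements of $\Z_t$ is a $\Z_t$-module homomorphism $\Z_t[x] \to \mathfrak{F}_t$, whose image is by definition $\mathfrak{P}_t$, and the image of a module homomorphism is itself a submodule (hence closed under addition). If desired, the same argument handles scalar multiplication by $c \in \Z_t$ via the polynomial $c \cdot p(x)$, which upgrades the statement to the assertion made in the surrounding text that $\mathfrak{P}_t$ is a $\Z_t$-submodule of $\mathfrak{F}_t$.
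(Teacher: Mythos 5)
Your proof is correct and follows essentially the same route as the paper's: pick polynomials $p(x)$ and $q(x)$ inducing the two given tuples and observe that their sum $p(x)+q(x)$ induces the componentwise sum. The paper's version is just a one-line sketch of the same argument; your added remark that evaluation is a $\Z_t$-module homomorphism is a clean way to package it but introduces no new content.
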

\proof
{
If $f$ and $g$ are the polynomials such that $f(x) = (a_0,\ldots,a_{t-1})$, 
and $g(x)=(b_0, \ldots, b_{t-1})$, for $x=0, \ldots, t-1$, then $h(x)$ defined as 
$f(x)+g(x)$ for all $x \in \Z_{t}$ is also a polynomial.

}

\begin{lemma}
\label{ProposThree}
If $(a_0,a_1,\ldots,a_{t-1}) \in \mathfrak{P}_{t}$, then $(sa_0,sa_1,\ldots,sa_{t-1}) \in \mathfrak{P}_{t}$,
where $s \in \Z_{t}$.
\end{lemma}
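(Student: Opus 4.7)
The plan is to mirror the proof of Lemma~\ref{ProposTwo} almost verbatim, simply replacing the addition of two polynomial functions with scalar multiplication of one polynomial function by a ring element $s \in \Z_t$. Since $\mathfrak{P}_t$ was defined as the set of functions induced by polynomials in $\Z_t[x]$, it suffices to exhibit a polynomial in $\Z_t[x]$ whose induced function is $(sa_0, sa_1, \ldots, sa_{t-1})$.

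First, I would invoke the hypothesis to obtain a polynomial $f(x) \in \Z_t[x]$ such that $f(i) = a_i$ for each $i \in \{0, 1, \ldots, t-1\}$. Next, I would consider the polynomial $h(x) := s \cdot f(x) \in \Z_t[x]$, which is obtained by multiplying each coefficient of $f$ by the scalar $s$; this is a legitimate element of $\Z_t[x]$ since $\Z_t$ is closed under multiplication. Finally, evaluating $h$ at $i$ gives $h(i) = s f(i) = s a_i$ for each $i$, so the tuple induced by $h$ is exactly $(sa_0, sa_1, \ldots, sa_{t-1})$, which therefore lies in $\mathfrak{P}_t$.

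There is no real obstacle here; the statement is essentially the second axiom needed to verify that $\mathfrak{P}_t$ is a $\Z_t$-submodule of $\mathfrak{F}_t$, complementing the closure under addition established in Lemma~\ref{ProposTwo}. The only thing worth being slightly careful about is distinguishing between $s$ as an element of the ring $\Z_t$ (acting on coefficients of the polynomial) and as a scalar acting on the module $\mathfrak{F}_t$, but these two actions agree by construction, so the argument closes immediately.
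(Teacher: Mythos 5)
Your proof is correct and follows the same route as the paper: both take a polynomial $f$ representing the given function and observe that $s\cdot f$ is again a polynomial inducing $(sa_0,\ldots,sa_{t-1})$. Your version just spells out the coefficientwise scalar multiplication a bit more explicitly.
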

\begin{proof}
If $f(x) = (a_0,a_1,\ldots,a_{t-1})$, for $x=0, \ldots, t-1$, then $sf(x)$, which is also polynomial
corresponds to $(sa_0,sa_1,\ldots,sa_{t-1})$. Hence  $(sa_0,sa_1,\ldots,sa_{t-1}) \in \mathfrak{P}_{t}$.
\end{proof}

From these two lemmas we have the following proposition.
\begin{proposition}
$\mathfrak{P}_{t}$ is a $\Z_{t}$-submodule of $\mathfrak{F}_{t}$.
\end{proposition}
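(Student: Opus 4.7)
The plan is to invoke the submodule criterion: a subset $M$ of a module $N$ is a submodule precisely when $M$ is nonempty, closed under addition, and closed under scalar multiplication by the ring. Since $\mathfrak{P}_{t}$ sits inside the $\Z_{t}$-module $\mathfrak{F}_{t}$, it suffices to verify these three conditions.

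First I would observe that $\mathfrak{P}_{t}$ is nonempty: the zero polynomial in $\Z_{t}[x]$ induces the zero function $(0,0,\ldots,0)\in\mathfrak{F}_{t}$, so the zero element of $\mathfrak{F}_{t}$ lies in $\mathfrak{P}_{t}$. Next, closure under addition is exactly the content of Lemma~\ref{ProposTwo}, and closure under multiplication by any scalar $s\in\Z_{t}$ is exactly Lemma~\ref{ProposThree}. Putting these three facts together completes the verification of the submodule axioms.

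There is no real obstacle here; the proposition is essentially a bookkeeping consequence of the two preceding lemmas together with the trivial observation that the zero function comes from the zero polynomial. If anything, the only thing to be careful about is making the hosting module structure explicit, namely that pointwise addition and pointwise scalar multiplication on $\mathfrak{F}_{t}\cong(\Z_{t})^{t}$ coincide with the operations induced on polynomial functions by addition and scalar multiplication of the underlying polynomials in $\Z_{t}[x]$, so that the closure statements of the lemmas really do translate into closure inside the ambient module $\mathfrak{F}_{t}$.
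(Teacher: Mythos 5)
Your proof is correct and follows essentially the same route as the paper, which states the proposition as an immediate consequence of Lemma~\ref{ProposTwo} (closure under addition) and Lemma~\ref{ProposThree} (closure under scalar multiplication). Your additional observations --- that the zero polynomial supplies nonemptiness and that the module operations on $\mathfrak{F}_{t}$ agree with those induced from $\Z_{t}[x]$ --- are sensible bookkeeping that the paper leaves implicit.
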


We intend to find a `suitable' 
generating set for $ \mathfrak{P}_{p^n}$ thereby 
translating it to a $\Z_{p^n}$-submodule membership problem.

\subsection{Paraphernalia}
\label{SectionThreeTwo}

\begin{definition}
\label{cyclicshiftdef}
Let $f \in \mathfrak{F}_{p^n}$. The $j$\textsuperscript{th} cyclic shift of $f$, denoted by $f^{<j>}$, is defined as 
$$ f^{<j>}(i)= f(i+j \: \mathrm{ mod } \: p^n)$$
for $i=0, \ldots, p^n-1. $
\end{definition}

\begin{definition}
 Let $v_1,v_2,\ldots,v_m \in \mathfrak{F}_{p^{n}}$. The $\Z_{p^n}$-submodule generated by $v_i$ for $i=1,2,\ldots,m$ and  
their cyclic shifts for $j=0, \ldots,p^n-1$ is denoted by \textbf{$\langle \langle v_1,v_2,\ldots,v_m \rangle \rangle $}.
\end{definition}

We shall identify a set $G' \subset  \mathfrak{P}_{p^{n}}$ 
such that $ \mathfrak{P}_{p^{n}} = \langle \langle G' \rangle \rangle$. 
The following lemma helps us describe such a set.

\begin{lemma}
\label{ProposOne}
 If $(a_0,a_1,\ldots,a_{{p^{n}}-1}) \in \mathfrak{P}_{p^{n}}$, then its cyclic shift 
$(a_1,\ldots,a_{{p^{n}}-1},a_0) \in \mathfrak{P}_{p^{n}}$.
\end{lemma}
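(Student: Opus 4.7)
The plan is to exhibit an explicit polynomial that realizes the cyclic shift. Since $(a_0,\ldots,a_{p^n-1}) \in \mathfrak{P}_{p^n}$, I can start from some $P(x) \in \Z_{p^n}[x]$ with $P(i) \equiv a_i \pmod{p^n}$ for $i=0,1,\ldots,p^n-1$, and form the substituted polynomial $Q(x) := P(x+1)$. Binomially expanding $P(x+1)$ shows that $Q(x)$ again lies in $\Z_{p^n}[x]$, so $Q$ does define an element of $\mathfrak{P}_{p^n}$; the only question is which tuple it represents.

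For $i = 0, 1, \ldots, p^n - 2$ the evaluation is immediate: $Q(i) = P(i+1) \equiv a_{i+1} \pmod{p^n}$, so the first $p^n - 1$ entries of the tuple attached to $Q$ match $(a_1, a_2, \ldots, a_{p^n-1})$ exactly. The step I expect to be the crux is handling the wrap-around at $i = p^n - 1$. Here $Q(p^n-1) = P(p^n)$, and I need to argue that $P(p^n) \equiv P(0) = a_0 \pmod{p^n}$. This follows by writing $P(x) = \sum_{k \ge 0} c_k x^k$ and noting that $(p^n)^k \equiv 0 \pmod{p^n}$ for every $k \ge 1$, so all higher order terms vanish modulo $p^n$ and only the constant term $c_0 = P(0) = a_0$ survives. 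Combining the two cases gives that $Q$ represents the tuple $(a_1, a_2, \ldots, a_{p^n-1}, a_0)$, which is exactly the cyclic shift, so this tuple lies in $\mathfrak{P}_{p^n}$.

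Conceptually, the whole argument rests on the fact that the constant $p^n$ is $\equiv 0$ in $\Z_{p^n}$, which forces polynomial evaluation to respect the cyclic structure of $\Z_{p^n}$ in a way that is not visible over $\Z$. No machinery beyond this single observation and the closure of $\Z_{p^n}[x]$ under the translation $x \mapsto x+1$ is needed, so I anticipate a very short proof with the only subtle point being the wrap-around at $p^n - 1$ described above. An immediate corollary worth noting is that by iterating, every cyclic shift of a polynomially representable tuple is polynomially representable, which justifies using the operator $\langle\langle\cdot\rangle\rangle$ introduced in the preceding definition on generators drawn from $\mathfrak{P}_{p^n}$.
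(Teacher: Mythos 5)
Your proof is correct and follows the same route as the paper: substitute $x \mapsto x+1$ in a polynomial representing the tuple and observe that the resulting polynomial represents the cyclic shift. You simply spell out the wrap-around step $P(p^n)\equiv P(0) \pmod{p^n}$, which the paper leaves implicit.
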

\begin{proof}
 Let $f(x) \in \Z_{p^n}[x]$ be the polynomial that gives rise to the function\\
$(a_0,a_1,\ldots,a_{{p^{n}}-1})$. Then
$f(x+1)$, which is also a polynomial, gives rise to $(a_1,a_2,\ldots,a_{{p^{n}}-1},a_0)$.
Hence the cyclic shift also belongs to $\mathfrak{P}_{p^{n}}$.
\end{proof}

Clearly, shifting by $j$ places is equivalent to replacing $f(x)$ by $f(x+j)$. 
Hence all cyclic shifts are polynomially representable.
We now state and prove two lemmas which are crucial in establishing our main result.

\begin{lemma}
\label{LemmaU0}
 The function $u_0: \Z_{p^{n}} \longrightarrow \Z_{p^{n}}$ defined as\\ 
\begin{equation}
\label{U0eqn}
u_0(x)=	
\left\{
			\begin{array}{ll}
			     0 & \mbox{ if } p\nmid x \\
			     1 & \mbox{ if } p \:| \:x
			\end{array}
\right.
\end{equation}

belongs to $ \mathfrak{P}_{p^{n}}$.
\end{lemma}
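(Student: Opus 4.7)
My plan is to exhibit an explicit polynomial in $\Z_{p^n}[x]$ that realises $u_0$. Let $S = \{k \in \N : 1 \le k < p^n,\ \gcd(k,p)=1\}$ be the set of representatives of the units of $\Z_{p^n}$, so $|S| = p^{n-1}(p-1)$, and consider
$$f(x) = \prod_{k \in S}(x-k) \in \Z_{p^n}[x].$$
The idea is that $f$ automatically vanishes at every non-multiple of $p$ (one of its linear factors is zero there), and that $f$ takes a constant unit value at every multiple of $p$, after which a single rescaling yields $u_0$.

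First I would verify the easy step: if $p \nmid x$, then $x \in S$ modulo $p^n$, so one of the factors $(x-k)$ vanishes mod $p^n$ and hence $f(x) \equiv 0 \pmod{p^n}$. Next, for the multiples of $p$, I would prove the key equality $f(pm) \equiv f(0) \pmod{p^n}$ for every $m$. The argument rests on a permutation of the units of $\Z_{p^n}$: because $p \mid pm$, the element $pm - k$ is a unit of $\Z_{p^n}$ iff $k$ is, so $k \mapsto pm - k$ induces a bijection of $S$ modulo $p^n$. Similarly, $k \mapsto -k$ is a bijection of $S$ modulo $p^n$. Therefore both products
$$f(pm) \equiv \prod_{k \in S}(pm-k) \quad \text{and} \quad f(0) \equiv \prod_{k \in S}(-k)$$
reduce modulo $p^n$ to $\prod_{j \in S} j$, and hence are equal mod $p^n$.

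Finally, I would check that $f(0)$ is a unit of $\Z_{p^n}$: each factor $-k$ is coprime to $p$, so the product is coprime to $p$, giving $f(0) \in (\Z_{p^n})^\times$. Setting $c = f(0)^{-1}$, the polynomial $c \cdot f(x)$ satisfies $c f(x) \equiv 1$ when $p \mid x$ and $c f(x) \equiv 0$ when $p \nmid x$, i.e.\ it represents $u_0$. The only genuinely nontrivial step is the permutation-of-units argument in Step~2; everything else is a direct verification. (As a sanity check in the smallest nontrivial case $p=2,\,n=2$, one obtains $c f(x) = 3(x-1)(x-3) \equiv 3x^2+1 \pmod{4}$, which indeed takes the values $1,0,1,0$ at $x=0,1,2,3$.)
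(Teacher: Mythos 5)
Your proof is correct, but it takes a genuinely different route from the paper. The paper verifies that $u_0$ satisfies Carlitz's criterion~\eqref{Carlitzeqn}: it takes $\Phi_0=u_0$ and $\Phi_i=0$ for $i\ge 1$ and observes that $u_0(x+sp)=u_0(x)$ because divisibility by $p$ is unchanged by adding multiples of $p$. That argument is existential (it never produces a polynomial) but is deliberately uniform with the proof of Lemma~\ref{LemmaUi}, of which Lemma~\ref{LemmaU0} is the case $k=0$. You instead exhibit an explicit polynomial $c\prod_{k\in S}(x-k)$, with $S$ the set of unit representatives, using the involution $k\mapsto pm-k$ of $S$ to show the product is constant on the multiples of $p$; all steps check out, including the unit-ness of $f(0)$ and your sanity check $3x^2+1$ over $\Z_4$. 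What your approach buys: it is self-contained (no appeal to Carlitz's theorem, which the paper only cites), and it directly yields a concrete representing polynomial --- something the paper defers to Section~\ref{FindingPolyn}, where it gives the different explicit representative $1-x^{\phi(p^n)}$ via Euler's theorem (note your polynomial has the same degree $\phi(p^n)=|S|$). What the paper's approach buys: its template extends verbatim to the functions $u_k$ of Lemma~\ref{LemmaUi} through the binomial expansion of $(x+sp)^k$, whereas your product construction needs the further observation (made in Section~\ref{FindingPolyn}) that multiplying the representing polynomial by $x^k$ handles $k\ge 1$.
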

\begin{proof}
We show that $u_0$ satisfies ~\eqref{Carlitzeqn}.
Let $\Phi_0(x) = u_0(x)$ and $\Phi_i$ be zero functions for $i=1,\ldots,n-1$.
Now if $p\nmid x$, $p\nmid (x+sp)$ for all $s \in \Z_{p^n}$. Therefore, 
\begin{eqnarray*}
u_0(x+sp) 
& = & 0\\
& = & u_0(x)+0\\
& = & \Phi_0(x) + (sp)\Phi_1(x) + (sp)^2\Phi_2(x) + \ldots + (sp)^{n-1}\Phi_{n-1}(x).
\end{eqnarray*}

If $p\:|\:x$ then $u_0(x)=1$ and $p\:|\:(x+sp)$ for all $s \in \Z_{p^{n}}$.
\begin{eqnarray*}
u_0(x+sp) 
& = & 1\\
& = & u_0(x)+0\\
& = & \Phi_0(x) + (sp)\Phi_1(x) + (sp)^2\Phi_2(x) + \ldots + (sp)^{n-1}\Phi_{n-1}(x).
\end{eqnarray*}

Hence $u_0 \in \mathfrak{P}_{p^{n}}$.

\end{proof}

\begin{lemma}
\label{LemmaUi}
The function $u_k: \Z_{p^{n}} \longrightarrow \Z_{p^{n}}$ defined as
\begin{equation}
\label{Uieqn}
 u_k(x)=	\left\{
			\begin{array}{ll}
			     0   & \mbox{ if } p\nmid x, \\
			     x^k & \mbox{ if } p \:|\: x.
			\end{array}
\right.
\end{equation}
belongs to $ \mathfrak{P}_{p^{n}}$ for $k=1,2,\ldots,n-1$.
\end{lemma}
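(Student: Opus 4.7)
The plan is to mirror the proof of Lemma~\ref{LemmaU0}: exhibit explicit functions $\Phi_0, \Phi_1, \ldots, \Phi_{n-1}$ on $\Z_{p^n}$ satisfying Carlitz's condition \eqref{Carlitzeqn} for $f = u_k$. Note that the $\Phi_i$ need only be functions (not polynomial functions themselves), so we have a great deal of freedom in choosing them.

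Motivated by the binomial expansion of $(x+sp)^k$, I would define, for $i = 0, 1, \ldots, k$,
\begin{equation*}
\Phi_i(x) = \begin{cases} 0 & \text{if } p \nmid x, \\ \binom{k}{i} x^{k-i} & \text{if } p \mid x, \end{cases}
\end{equation*}
and set $\Phi_i \equiv 0$ for $k < i \leq n-1$. The verification then splits into the same two cases as in Lemma~\ref{LemmaU0}.

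In the case $p \nmid x$, we have $p \nmid (x+sp)$ for every $s \in \Z_{p^n}$, so $u_k(x+sp) = 0$, and simultaneously $\Phi_i(x) = 0$ for all $i$, so the right-hand side of \eqref{Carlitzeqn} is $0$. In the case $p \mid x$, write $x = pt$; then $p \mid (x+sp)$, so $u_k(x+sp) = (x+sp)^k$, and the right-hand side of \eqref{Carlitzeqn} becomes
\begin{equation*}
\sum_{i=0}^{k} (sp)^i \binom{k}{i} x^{k-i} = \sum_{i=0}^{k} \binom{k}{i} x^{k-i}(sp)^i = (x+sp)^k,
\end{equation*}
where terms for $i > k$ drop out since $\Phi_i \equiv 0$ there. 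Both sides agree, and so $u_k$ satisfies \eqref{Carlitzeqn}, giving $u_k \in \mathfrak{P}_{p^n}$.

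I do not anticipate any serious obstacle: the only thing to watch is that the assumption $k \leq n-1$ ensures we have enough slots $\Phi_0, \ldots, \Phi_{n-1}$ to accommodate the full binomial expansion of $(x+sp)^k$, so no truncation mod $p^n$ is needed in the verification beyond what is already handled by setting the higher $\Phi_i$ to zero.
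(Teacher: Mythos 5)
Your proposal is correct and follows essentially the same route as the paper: the same choice of $\Phi_i$ (your $i=0$ case $\binom{k}{0}x^k$ coincides with the paper's $\Phi_0 = u_k$), the same zero functions for $i > k$, and the same two-case verification via the binomial expansion of $(x+sp)^k$. No gaps.
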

\begin{proof}
We make use of ~\eqref{Carlitzeqn} again. 
Define $\Phi_0 = u_k$, for a fixed $k \in \{1,2,\ldots,n-1 \}$. For $i=1,2,\ldots,k$ define $\Phi_i$ as\\
$$\Phi_i(x)=	
\left\{
			\begin{array}{ll}
			     0   & \mbox{ if } p\nmid x \\
			     { k\choose i}  x^{k-i} & \mbox{ if } p \:| \:x
			\end{array}
\right.
$$

For $k<i \leq n-1$ define $\Phi_i$ as zero function.

If $p\nmid x$ then $u_k(x+ps)=u_k(x)=0$ and it satisfies ~\eqref{Carlitzeqn}.

If $p\:|\:x$ then 
\begin{eqnarray*}
u_k(x+sp) 
& = & (x+sp)^k\\
& = & x^k + {{k}\choose{1}}x^{k-1}(sp)+ {{k}\choose{2}}x^{k-2}(sp)^2+ \ldots+ {{k}\choose{k}}x^{0}(sp)^k\\
& = & u_k(x) + (sp)\Phi_1(x)+\ldots+ (sp)^k\Phi_k(x)+0\\
& = & \Phi_0(x) + (sp)\Phi_1(x) + (sp)^2\Phi_2(x) + \ldots + (sp)^{n-1}\Phi_{n-1}(x).
\end{eqnarray*}
Hence it satisfies ~\eqref{Carlitzeqn}. Therefore $u_k \in \mathfrak{P}_{p^{n}}$.

\end{proof}

Lemma~\ref{LemmaU0} is in fact a special case of Lemma~\ref{LemmaUi} when 
$k=0$. 
Lemma~\ref{LemmaUi} essentially means that the following vectors can 
be represented as polynomials.
\begin{eqnarray*}
u_0  & = & (1,\underbrace{0,\ldots,0}_\text{p-1 times},1,\underbrace{0,\ldots,0}_\text{p-1 times}, 1,\ldots,1,\underbrace{0,\ldots,0}_\text{p-1 times})\\
u_1  & = & (0,\underbrace{0,\ldots,0}_\text{p-1 times},p,\underbrace{0,\ldots,0}_\text{p-1 times},2p,\ldots,(p^n-p),\underbrace{0,\ldots,0}_\text{p-1 times})\\
u_2  & = &(0,\underbrace{0,\ldots,0}_\text{p-1 times},p^2,\underbrace{0,\ldots,0}_\text{p-1 times},(2p)^2,\ldots,(p^n-p)^2,\underbrace{0,\ldots,0}_\text{p-1 times})\\
&  & \qquad \qquad \qquad \qquad  \vdots \\
u_{n-1} & = &(0,\underbrace{0,\ldots,0}_\text{p-1 times},p^{n-1},\underbrace{0,\ldots,0}_\text{p-1 times},(2p)^{n-1},\ldots,(p^n-p)^{n-1},\underbrace{0,\ldots,0}_\text{p-1 times})
\end{eqnarray*}
\subsection{The Characterization}
\label{SectionThreeThree}

We now provide the main result of this paper. 
It asserts that a function is polynomial if and only if it belongs to 
the submodule generated by $u_k$ for $k=0, \ldots,n-1$ and their cyclic shifts.

\begin{theorem}
\label{Myresult}
 $f \in \mathfrak{P}_{p^{n}}$ if and only if
 $f \in \langle \langle u_0, u_1,\ldots, u_{n-1}\rangle \rangle$, 
where $u_k$ for $k=0, \ldots, n-1$ are defined as in ~\eqref{Uieqn} and 
$\langle \langle u_0, u_1,\ldots, u_{n-1}\rangle \rangle$ denotes the set generated by 
the vectors $u_k$ for $k=0, \ldots, n-1$ and their cyclic shifts.
\end{theorem}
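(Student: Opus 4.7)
The plan is to handle each direction separately. The implication $(\Leftarrow)$ is immediate from what has already been established: each generator $u_k$ lies in $\mathfrak{P}_{p^n}$ by Lemma~\ref{LemmaUi}, cyclic shifts of polynomial functions are polynomial by Lemma~\ref{ProposOne}, and $\mathfrak{P}_{p^n}$ is a $\Z_{p^n}$-submodule of $\mathfrak{F}_{p^n}$, so the whole submodule $\langle\langle u_0,\ldots,u_{n-1}\rangle\rangle$ sits inside $\mathfrak{P}_{p^n}$.

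The substantive direction is $(\Rightarrow)$: given a polynomial $f$, I will produce an explicit $\Z_{p^n}$-linear combination of cyclic shifts of the $u_k$ that equals $f$. First, invoke Carlitz's theorem to obtain functions $\Phi_0,\ldots,\Phi_{n-1} \in \mathfrak{F}_{p^n}$ satisfying~\eqref{Carlitzeqn}. The key observation is structural: for each residue $r \in \{0,1,\ldots,p-1\}$, the cyclic shift $u_k^{<-r>}$ (with $-r$ read modulo $p^n$) vanishes off the coset $r + p\Z_{p^n}$, because for $x \not\equiv r \pmod p$ we have $p \nmid (x-r)$, while on the coset, writing $x = r + sp$, one gets $u_k^{<-r>}(x) = u_k(sp) = (sp)^k$ for $k \geq 1$ and $u_0^{<-r>}(x) = 1$.

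The proposed identity is
\begin{equation*}
f \;=\; \sum_{r=0}^{p-1}\sum_{k=0}^{n-1} \Phi_k(r)\cdot u_k^{<-r>},
\end{equation*}
a $\Z_{p^n}$-combination of cyclic shifts of the $u_k$ that by construction lies in $\langle\langle u_0,\ldots,u_{n-1}\rangle\rangle$. To verify, I evaluate both sides at an arbitrary $x \in \Z_{p^n}$: writing $x = r^{*} + sp$ with $r^{*} = x \bmod p \in \{0,\ldots,p-1\}$, the disjointness of the cosets $r + p\Z_{p^n}$ ensures that only terms with $r = r^{*}$ contribute, so the right-hand side collapses to $\sum_{k=0}^{n-1} \Phi_k(r^{*})(sp)^k$, which is exactly $f(r^{*}+sp) = f(x)$ by~\eqref{Carlitzeqn}.

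The main obstacle is identifying the right generators and the right shifts. Once one sees that the shifts $u_k^{<-r>}$ localize onto disjoint cosets of $p\Z_{p^n}$ and reproduce the monomials $(sp)^k$ there, Carlitz's decomposition essentially dictates the coefficients $\Phi_k(r)$ and the verification becomes a one-line computation on each coset; the bulk of the work of this theorem really is in recognizing that the $u_k$ are the natural local building blocks dual to Carlitz's expansion.
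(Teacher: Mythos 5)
Your proof is correct and takes essentially the same route as the paper: the easy inclusion follows from the closure lemmas, and the substantive direction uses the same decomposition of $f$ over the residue classes mod $p$, with Carlitz's expansion supplying the coefficients $\Phi_k(r)$ for the shifted generators. The only (cosmetic) difference is that you write the shift as $u_k^{<-r>}$, which is in fact the sign consistent with Definition~\ref{cyclicshiftdef} ($f^{<j>}(i)=f(i+j)$), whereas the paper writes $u_k^{<j>}$ for the function supported on the coset $j+p\Z_{p^n}$; your bookkeeping is the more careful of the two.
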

\begin{proof}
($\Longrightarrow$)To show that $f \in \langle \langle u_0, u_1,\ldots, u_{n-1}\rangle \rangle \mbox{ implies } f \in \mathfrak{P}_{p^{n}} $.\\
From Lemma~\ref{LemmaUi} we know that $u_k \in \mathfrak{P}_{p^n}$
for $k=0, \ldots, n-1$. 
Let $u_{k}^{<j>}$ denote the $j$\textsuperscript{th} cyclic shift of $u_k$. From 
Lemma~\ref{ProposOne} we know that $u_{k}^{<j>} \in \mathfrak{P}_{p^{n}}$ for all $k=0,1,\ldots, n-1$ and 
$j=0,1,\ldots, {p}-1$. From Lemmas~\ref{ProposTwo} and ~\ref{ProposThree} we know that 
linear combinations of $u_{k}^{<j>} \in \mathfrak{P}_{p^{n}}$.

Let $ f \in \langle \langle u_0,\ldots,u_{n-1}\rangle \rangle$. Then there exist
scalars $\alpha_{k,j} \in \Z_{p^{n}}$, for $k=0,1,\ldots,n-1$ and $j=0,1,\ldots,p-1$
such that
\begin{align*}
 f = \: &\alpha_{0,0}u_0^{<0>} + \alpha_{0,1}u_{0}^{<1>} + \ldots + \alpha_{0,p-1}u_{0}^{<p-1>} \\
     &+ \alpha_{1,0}u_1^{<0>} + \alpha_{1,1}u_{1}^{<1>} + \ldots + \alpha_{1,p-1}u_{1}^{<p-1>} \\
     &\qquad \qquad \qquad \qquad \qquad \vdots \\
     &+ \alpha_{n-1,0}u_{n-1}^{<0>}+  \alpha_{n-1,1}u_{n-1}^{<1>}+ \ldots+ \alpha_{n-1,p-1}u_{n-1}^{<p-1>}\\
   = &\displaystyle\sum\limits_{k=0}^{n-1} \displaystyle\sum\limits_{j=0}^{p-1}  \alpha_{k,j}u_{k}^{<j>}.
\end{align*}
Clearly all terms in the summation belong to $\mathfrak{P}_{p^n}$. Hence $f \in \mathfrak{P}_{p^{n}}$.\\

\noindent($\Longleftarrow$)To show that $ f \in \mathfrak{P}_{p^{n}} \mbox{ implies } f \in \langle \langle u_0, u_1,\ldots, u_{n-1}\rangle \rangle$.\\
Let $f = (a_0,a_1,\ldots, a_{{p^{n}}-1})$, where $a_i \in \Z_{p^n}$, 
for $i=0, \ldots, p^n-1$. We can write $f$ as
$$f = v_0 + v_1+ \ldots + v_{p-1}, $$
where $v_j$ is the function defined as
\begin{equation}
\label{videfeqn}
v_j(i)= \begin{cases}
            a_i & \text{if } i \equiv j \text{ mod }p\\
	    0   & \text{if } i \not\equiv j \text{ mod }p,
           \end{cases}
\end{equation}
for $j=0, \ldots, p-1$. 
We now show that each $v_j \in \mathfrak{P}_{p^{n}}$. From ~\eqref{Carlitzeqn},
\begin{eqnarray*}
 a_{j+ps}
& = & f(j+ps)\\
& = & \Phi_0(j) + (sp)\Phi_1(j) + \ldots + (sp)^{n-1}\Phi_{n-1}(j).\\
\end{eqnarray*}
For $j=0,\ldots, p-1$
$$ v_j(i)= \begin{cases}
            \Phi_0(j) + (sp)\Phi_1(j) + \ldots + (sp)^{n-1}\Phi_{n-1}(j) & \text{if } i \equiv j \text{ mod }p\\
	    0   & \text{if } i \not\equiv j \text{ mod }p,
           \end{cases}
$$
where $i=j+ps.$ 
Each $v_j$ can be written as 
\begin{equation}
\label{asdf}
v_j = \eta_0^{(j)} + \ldots + \eta_{n-1}^{(j)},
\end{equation}
where $\eta_k^{(j)}$ denotes the function
$$ \eta_k^{(j)}(i)= \begin{cases}
            \Phi_k(j)(sp)^k & \text{if } i=j+sp\\
	    0   & \text{otherwise }
           \end{cases}
$$
for $k=0, \ldots, n-1$. From Lemma~\ref{LemmaUi}, we can see that 
$$\eta_k^{(j)} = \Phi_k(j)u_k^{<j>}. $$
From ~\eqref{asdf}
$$ v_j = \Phi_{0}(j)u_{0}^{<j>} + \ldots + \Phi_{n-1}(j)u_{n-1}^{<j>}.$$
$\Phi_k$ is well defined and $\Phi_k \in \Z_{p^{n}}$ for $k=0, 1, \ldots n-1$.
Hence $v_j$ is a linear combination of $u_k$, for $k=0, \ldots, n-1$ and their cyclic shifts.
Since $v_j \in\langle \langle u_0, u_1,\ldots, u_{n-1}\rangle \rangle$
for $j=0, \ldots, p-1$, $f \in \langle \langle u_0, u_1,\ldots, u_{n-1}\rangle \rangle$.

\end{proof}

Note that not all cyclic shifts of $u_k$ are required, 
for $k=0, \ldots, n-1$, but 
only the first $p$ shifts of each $u_k$. This is because all the other cyclic shifts 
can be written as linear combination of the first $p$ cyclic shifts. 
Hence each polynomial in $ \mathfrak{P}_{p^{n}}$ 
can be represented as a scalar sum of at most $np$ vectors.

This result is in fact a generalization of the generating set 
for vector space. The standard basis of the vector 
space $\mathfrak{F}_{p}$ corresponds to $u_0$ mentioned above 
and its cyclic shifts.

\section{Algorithm based on new characterization}
\label{AlgorithmicAspects}

Using Theorem~\ref{Myresult} we provide a method in Algorithm 1 which solves the decision problem mentioned earlier
by reducing it to a system of linear equations. The advantage of this reduction is that 
it is much easier to check if a system has solutions rather than check for 
the existence of functions which is done in ~\eqref{Carlitzeqn}. The
linear equations can be solved by standard computational methods.
We now present the algorithm based on the characterization. 
In the algorithm the following notations are used.

\noindent $A$ denotes the $(n-1) \times (n-1)$ matrix with elements from $\Z_{p^n}$

\begin{equation}
\label{reducedmatrix}
\left( 
\begin{array}{cccc}
p  &  p^2    & \ldots & p^{n-1}\\
2p & (2p)^2  & \ldots & (2p)^{n-1}\\
3p & (3p)^2  & \ldots & (3p)^{n-1}\\
\vdots & \vdots & &\vdots\\
(n-1)p& ((n-1)p)^2 & \ldots & ((n-1)p)^{n-1}
\end{array}
\right) .
\end{equation}
$v_i$ represents a $p^{n-1}$-tuple which forms a subarray of input for $i=0, \ldots, p-1$. 
$w_i$ represents a $p^{n-1}$-tuple of the form $(p^i,(2p)^i,\ldots,(p^n-p)^i)$ for $i=0, \ldots, n-1$ \ie
$u_i$ without the extraneous zeroes.\\

\begin{algorithm}
\label{modifiedalgo}
\caption{Determination of Polynomial Functions}
\begin{algorithmic}

\State \textbf{Input:} $f=(a_0,a_1,\ldots,a_{p^n-1})$, where $p$ is prime and $n \in \N$.\\

\State Split $f$ into $p$ subarrays $v_i$ such that \Comment Step 1\\
$v_i = (a_i, a_{i+p}, a_{i+2p},\ldots,a_{i+p^n-p}).$
\State 
\For {$i=0,1,\ldots,p-1$} \Comment Step 2
         \State $v_i= v_i - a_iw_0$\\
\EndFor 

\State Let $v_i=(0, b_1^{(i)}, b_2^{(i)}, \ldots, b_{p^{(n-1)}-1}^{(i)})$.
\State 
\For{$i=0,1,\ldots,p-1$}  \Comment Step 3
      \For{$j=0,1,\ldots,p^{(n-1)}-1$}
	    \If{$p \nmid b_{j}^{(i)}$}
		\State \textbf{Output:} $f$ is not polynomial.
		\State exit
	    \EndIf
      \EndFor
\EndFor

\State 
\For {$i=0,1,\ldots,p-1$} \Comment Step 4
         \If{$A
\left( 
\begin{array}{c}
x_1\\
x_2\\
\vdots\\
x_{n-1}            
\end{array}
\right) = 
\left( 
\begin{array}{c}
b_1^{(i)}\\
b_2^{(i)}\\
\vdots\\
b_{n-1}^{(i)}            
\end{array}
\right)$ has no solution} \Comment A as in ~\eqref{reducedmatrix}

	\State \textbf{Output:} $f$ is not polynomial.
	\State exit\\
\EndIf
\EndFor

\State Let $\Phi^{(i)} = (\Phi_{1}^{(i)},\Phi_{2}^{(i)},\ldots,\Phi_{n-1}^{(i)})$ be the solution.\\
 
\State 
\For{$i=0,1,\ldots,p-1$} \Comment Step 5
      \If{$v_i = \displaystyle\sum\limits_{j=1}^{n-1} \Phi_{j}^{(i)}w_i$}
\vspace{0.5cm}
	    \State \textbf{Output:} $f$ is polynomial.
      \Else
	    \State \textbf{Output:} $f$ is not polynomial.
      \EndIf
\EndFor

\end{algorithmic}
\end{algorithm}

One can see from ~\eqref{Carlitzeqn} that $f(x+sp)$ depends on $f(x)$. In step 1 of Algorithm 1 
we collect all the dependents in a single vector $v_i$ of length $p^{n-1}$. 
Note that all the $v_i$, for $i=0, \ldots, n-1$ are independent of each other.

Let $q=p^n$. Substituting $s=0$, we get $\Phi_0(x) = f(x)$ for all $x \in \Z_q$. 
In step 2 we subtract this first term from each $v_i$ to get a new vector
$$(0,a_{i+p}-a_i, a_{i+2p}-a_i, \ldots, a_{i+\frac{q}{p}}-a_i)$$ 
which is written as
$(0, b_1^{(i)}, b_2^{(i)}, \ldots, b_{\frac{q}{p}-1}^{(i)})$. 

~\eqref{Carlitzeqn} implies that 
if the input function $f$ is a polynomial then 
$a_{i+ps} - a_i$ must be divisible by $p$.
Therefore all $b_{j}^{(i)}$s must be zeroes or multiples 
of $p$. With a single pass on $v_i$, for $i=0, \ldots, n-1$, we perform this check in step 3. If any of the $v_i$ fails we 
conclude that $f$ is not polynomial. 

In step 4, we consider the following system of linear equations over $\Z_q$, 
with variables $x_i$.\\
\begin{equation}
\label{redsystem}
\left( 
\begin{array}{cccc}
p  &  p^2    & \ldots & p^{n-1}\\
2p & (2p)^2  & \ldots & (2p)^{n-1}\\
\vdots & \vdots & &\vdots\\
(n-1)p& ((n-1)p)^2 & \ldots & ((n-1)p)^{n-1}
\end{array}
\right)
\cdot
\left( 
  \begin{array}{c}
    x_1\\
    x_2\\
   \vdots\\
   x_{n-1}
  \end{array}
\right) = 
\left( 
  \begin{array}{c}
    b_1^{(i)}\\
    b_2^{(i)}\\
   \vdots\\
    b_{n-1}^{(i)}          
  \end{array}
\right)
\end{equation}

\noindent 
Here we make use of ~\eqref{Carlitzeqn} to check if there exist functions $\Phi_j$ such that 
\[ \begin{array}{rll}
a_{i+p} - a_i =& b_{1}^{(i)} =& p\Phi_1 + p^2\Phi_2 + \ldots + p^{n-1}\Phi_{n-1}  \\
a_{i+2p} - a_i= &  b_{2}^{(i)} =& 2p\Phi_1 + (2p)^2\Phi_2 + \ldots + (2p)^{n-1}\Phi_{n-1} \\
 & &\qquad \qquad \vdots \\
a_{i+(n-1)p} - a_i= &  b_{n-1}^{(i)}= & (n-1) p\Phi_1 + ((n-1)p)^2\Phi_2 + \ldots + ((n-1)p)^{n-1}\Phi_{n-1}
\end{array}
\] 

We remind ourselves that we are working with elements from the ring $\Z_{p^n}$, 
where division by $p$ is not defined. However, if $f$ happens to be a polynomial, 
then all multiples of $p$ in $b_j^{(i)}$ evenly cancel out. If at any stage  
a division by $p$ is encountered it immediately implies that $f$ is not polynomial, 
since the system has no solution.

If solution exists for all $i=0,1,\ldots,p-1$, we then proceed to check in step 5 if the solution satisfies the 
condition for remaining components of $v_i$, \ie we check if 
$$v_i \in \langle u_{0}^{<i>}, u_{1}^{<i>}, \ldots, u_{n-1}^{<i>} \rangle, $$
where $u_j^{<i>}$ is the $i$\textsuperscript{th} cyclic shift of $u_j$. If the above condition is true for 
$i=0,1,\ldots,p-1$ we conclude that $f$ is polynomial representable. 

The reason we choose to  check for the $(n-1)$ components first separately is because 
had we considered all the components together we would have arrived at an over-defined 
system of equations with $p^n-1$ equations for $n-1$ variables. Computation of rank to 
check for solutions would take $O((p^n)^2)$ instead of $O(n^2)$ as in the case of our algorithm.  

The Algorithm 1 can be fully understood with the help of an example.
\begin{example}
 Consider $p=2, n=3$. Then 

\[ \begin{array}{lcccccccc}
u_0=&(1 & 0 & 1 & 0 & 1 & 0 & 1 & 0)\\
u_1=&(0 & 0 & 2 & 0 & 4 & 0 & 6 & 0)\\
u_2=&(0 & 0 & 4 & 0 & 0 & 0 & 4 & 0)
\end{array}
\] 
\end{example}
Let $f$ over $\Z_8$ be defined as 
$$f= (2,1,6,1,2,1,6,1).$$

After Step 1:
\[
\begin{array}{lcccc} 
v_0=& (2 & 6 & 2 & 6)\\
v_1=& (1 & 1 & 1 & 1)
\end{array}
\]

After Step 2:
\[
\begin{array}{lcccc} 
v_0=& (0 & 4 & 0 & 4)\\
v_1=& (0 & 0 & 0 & 0)
\end{array}
\]
After Step 3 we find that all the entries are divisible by $2$.

\noindent In Step 4\\
For $v_0$:
\[
\left(
\begin{array}{cc}
 2 & 4\\
 4 & 0
\end{array}
\right)
\left(
\begin{array}{c}
 x_1\\ 
 x_2
\end{array}
\right)
=
\left(
\begin{array}{c}
 4 \\
 0
\end{array}
\right)
\]
for which solution exists, namely $x_1=2,x_2=0$. Hence, $v_0 \in \mathfrak{P}_8$.
We are misusing the notation slightly. We have avoided the 
extra zeroes for clarity.\\
Clearly $v_1 =(0,0,0,0) \in \mathfrak{P}_8$. Therefore $f$ is a 
polynomial function.

\begin{proposition}
 Algorithm 1 computes whether input function is
polynomially representable.
\end{proposition}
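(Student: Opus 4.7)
The plan is to prove correctness by reducing the algorithm, step by step, to the characterization in Theorem~\ref{Myresult}. By that theorem, $f \in \mathfrak{P}_{p^n}$ if and only if there exist scalars $\alpha_{k,j} \in \Z_{p^n}$, for $k = 0,\ldots,n-1$ and $j = 0,\ldots,p-1$, such that
\[
f \;=\; \sum_{k=0}^{n-1}\sum_{j=0}^{p-1} \alpha_{k,j}\, u_k^{<j>}.
\]
The key structural observation I would use at the outset is that each $u_k^{<j>}$ is supported exactly on indices $i$ with $i \equiv j \pmod{p}$ (this follows directly from the definitions in~\eqref{U0eqn} and~\eqref{Uieqn}). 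Consequently, the decomposition $f = v_0 + v_1 + \cdots + v_{p-1}$ produced by Step 1 is precisely the canonical decomposition by residue class mod $p$, and $f \in \mathfrak{P}_{p^n}$ if and only if each $v_i$ lies in the sub-$\Z_{p^n}$-module generated by $u_0^{<i>}, u_1^{<i>}, \ldots, u_{n-1}^{<i>}$. Thus the $p$ residue classes are decoupled, and correctness reduces to showing that Steps 2--5 correctly decide membership in this much smaller module for each fixed $i$.

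Next, I would analyze a single residue class $i$. Since $u_0^{<i>}$ is the only generator whose $i$-th entry is nonzero (and equals $1$), any representation $v_i = \sum_{k=0}^{n-1}\alpha_{k,i}u_k^{<i>}$ forces $\alpha_{0,i} = a_i$. This justifies Step 2, which subtracts $a_i w_0$ to produce a vector $(0, b_1^{(i)}, \ldots, b_{p^{n-1}-1}^{(i)})$ that must lie in the submodule generated by $u_1^{<i>}, \ldots, u_{n-1}^{<i>}$. Every such generator has entries divisible by $p$, so Step 3's divisibility check is a valid necessary condition; conversely, if it fails, Theorem~\ref{Myresult} immediately rules out polynomiality. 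For Step 4, I would observe that writing the first $n-1$ nonzero entries of $v_i$ in terms of the remaining coefficients $\alpha_{1,i}, \ldots, \alpha_{n-1,i}$ yields exactly the linear system~\eqref{redsystem} with matrix $A$ in~\eqref{reducedmatrix}. Existence of a solution over $\Z_{p^n}$ is necessary; if none exists, then no combination of generators can produce $v_i$, so $f$ is not polynomial.

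The main obstacle is the sufficiency argument for Step 5: a solution to the truncated system only forces agreement in the first $n-1$ constrained positions, while the vector $v_i$ has $p^{n-1}-1$ constrained positions in total. Here I would argue as follows. If $f$ is genuinely polynomial, then by Carlitz's equation~\eqref{Carlitzeqn} (as used in the proof of Theorem~\ref{Myresult}) there actually exist values $\Phi_1(i), \ldots, \Phi_{n-1}(i) \in \Z_{p^n}$ that realize $v_i$ across all positions simultaneously; these values satisfy the restricted system of Step 4, and Step 5 will therefore verify successfully. Conversely, if Step 4 has a solution $\Phi^{(i)}$ but the full identity $v_i = \sum_{j=1}^{n-1}\Phi_j^{(i)}w_j$ fails at some later position, then no scalars $\alpha_{k,i}$ whatsoever can reproduce $v_i$ (because the solution of the square system in Step 4, if unique, would be the only candidate; and if the system is under-determined, all solutions differ only in values annihilated by every $(sp)^k$ term at the first $n-1$ positions but such freedom cannot fix a discrepancy at a later position without being detected). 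This argument needs careful justification, but the upshot is that Steps 4 and 5 together exactly capture the membership criterion.

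Putting the pieces together, the algorithm outputs ``polynomial'' if and only if each $v_i$ lies in $\langle u_0^{<i>}, \ldots, u_{n-1}^{<i>}\rangle$, which by the residue-class decomposition and Theorem~\ref{Myresult} is equivalent to $f \in \mathfrak{P}_{p^n}$. The main obstacle is making the sufficiency argument for Step 5 rigorous; everything else reduces to bookkeeping from the characterization theorem.
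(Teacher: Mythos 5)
Your proposal follows essentially the same route as the paper's proof: invoke Theorem~\ref{Myresult}, observe that each generator $u_k^{<j>}$ is supported on a single residue class mod $p$ so that membership decouples into $p$ independent conditions $v_i \in \langle u_0^{<i>},\ldots,u_{n-1}^{<i>}\rangle$, note that Step 2 strips off the forced coefficient $\alpha_{0,i}=a_i$ of $u_0^{<i>}$, and identify Steps 3--5 with the resulting linear system over $\Z_{p^n}$. The one point where you rightly hesitate --- the sufficiency of Step 5 when the truncated $(n-1)\times(n-1)$ system has many solutions, so that the algorithm's arbitrarily chosen $\Phi^{(i)}$ might fail the full check even though some other solution would succeed --- is a genuine issue, since $A$ in~\eqref{reducedmatrix} has every entry divisible by $p$ and is therefore never invertible over $\Z_{p^n}$; but the paper's own proof passes over exactly the same point, asserting without argument that checking one solution of~\eqref{redsystem} against the remaining components of~\eqref{fullmatrix} decides membership. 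Closing this would require showing that any two solutions of the reduced system differ by an element of the kernel of the full matrix in~\eqref{fullmatrix} (equivalently, that the two kernels coincide), which neither you nor the paper establishes. So your attempt matches the paper in both strategy and level of rigor, and is arguably more honest about where the remaining work lies.
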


\begin{proof}
The proof of termination of the algorithm is trivial because  
of the finite nature of the structures involved.

From Theorem~\ref{Myresult} we have that 
a function $f$ is polynomial if and only if $f\in \langle \langle u_0, u_1,\ldots, u_{n-1}\rangle \rangle$. 
More specifically,
$f$ is   polynomial if and only if $v_i \in \langle u_0^{<i>}, u_1^{<i>},\ldots, u_{n-1}^{<i>}\rangle$, 
for $i=0,1,\ldots, p-1$, where $u_{j}^{<i>}$ denotes the $i$\textsuperscript{th} cyclic shift of $u_j$.

In other words, there exist scalars $\alpha_0, \alpha_1, \ldots, \alpha_{n-1}$
in $\Z_{p^{n}}$ such that
$$v_i= \alpha_0u_0^{<i>} + \alpha_1u_1^{<i>}+ \ldots + \alpha_{n-1}u_{n-1}^{<i>}.  $$

Suppose for convenience we drop the implicit zeros and write vector $v_i$ as
$v_i = (b_{1}^{(i)},b_{2}^{(i)},\ldots, b_{p^{(n-1)}-1}^{(i)})$, where $b_{j}^{(i)}$
are as described in Algorithm 1. 
Then 
there exist scalars $\alpha_0, \alpha_1, \ldots, \alpha_{n-1}$ such that
$$
\left( 
  \begin{array}{c}
    b_1^{(i)}\\
    b_2^{(i)}\\
   \vdots\\
    b_{p^{(n-1)}-1}^{(i)}     
  \end{array}
\right)
= 
\alpha_0
\left( 
  \begin{array}{c}
   1 \\
   1 \\
   \vdots\\
   1            
  \end{array}
\right)
+
\alpha_1
\left( 
  \begin{array}{c}
   p \\
   2p \\
   \vdots\\
   {p^{n}}-p           
  \end{array}
\right)+ \ldots +
\alpha_{n-1}
\left( 
  \begin{array}{c}
   p^{n-1} \\
   (2p)^{n-1} \\
   \vdots\\
   ({p^{n}}-p)^{n-1}            
  \end{array}
\right) .
$$\\

After step 2 of Algorithm 1, we get the first component of $v_i$ to be zero, \ie we eliminate the 
contribution of $u_0$. 
Let $\mathbf{x}$
be the vector $(x_1,x_2,\ldots,x_{n-1})$. We check for solutions of
\begin{equation}
\label{fullmatrix}
 \left( 
\begin{array}{cccc}
p  &  p^2    & \ldots & p^{n-1}\\
2p & (2p)^2  & \ldots & (2p)^{n-1}\\
\vdots & \vdots & &\vdots\\
(p^n-p)& (p^n-p)^2 & \ldots & (p^n-p)^{n-1}
\end{array}
\right)
\cdot
\left( 
  \begin{array}{c}
    x_1\\
    x_2\\
   \vdots\\
   x_{n-1}
  \end{array}
\right) = 
\left( 
  \begin{array}{c}
    b_1^{(i)}\\
    b_2^{(i)}\\
   \vdots\\
    b_{q/p}^{(i)}          
  \end{array}
\right)
\end{equation}
Now each
 $v_i \in \langle \langle u_0^{<i>}, u_1^{<i>}, \ldots, u_{n-1}^{<i>}  \rangle \rangle$ if
~\eqref{fullmatrix} has a solution for $x_i$ in $\Z_{p^{n}}$. 
Let $A$ be the matrix defined in ~\eqref{reducedmatrix}. 
Then,
$v_i \in \langle u_0^{<i>}, u_1^{<i>},\ldots, u_{n-1}^{<i>}\rangle$
if and only if $A \cdot \mathbf{x} = v_i$ in ~\eqref{redsystem} has solution  
and for $j=n, n+1, \ldots, p^{n-1}-1$
$$ b_{j}^{(i)} = \displaystyle\sum\limits_{j=1}^{n-1} \alpha_i(jp)^i.$$ 

Step 3 checks if $A \cdot \mathbf{x}$ in ~\eqref{redsystem} has a solution. 
Step 4 checks if the solution obtained in previous 
step satisfies for remaining components in ~\eqref{fullmatrix}.

\end{proof}

We now give a brief analysis of space and time complexities of the algorithm. 
We assume that the input is given in an array of size ${p^{n}}$, which is a reasonable 
assumption. Also we assume that addition and scalar multiplication on vector of 
size ${p^{n}}$ takes $O({p^{n}})$ time. 

\subsubsection*{Time complexity:}
Step 1 takes constant time as no explicit computation is involved: $O(1)$.\\
Step 2 involves a vector addition: $O({p^{n-1}})$.\\
Step 3 involves one array traversal: $O({p^{n-1}})$.\\
Step 4 involves computing rank of $(n-1) \times (n-1)$ matrix to check for solution. 
If solution exists it can be found using Gaussian elimination: $O(n^3)$.\\
Step 5 involves a comparison between two vectors: $O({p^{n-1}})$.\\

Note that steps 2-5 can be performed in parallel as the $v_i$ are independent 
of each other. Assuming a sequential model of computation,
\begin{eqnarray*}
\mbox{T}(p,n) 
& = & O(1)+ O({p^{n}})+ O({p^{n}}) + O(pn^3)+ O({p^{n}})\\
& = & O({p^{n}} + pn^3)\\
& = & O(p^n + pn^3).
\end{eqnarray*}

For all practical purposes $n^3 \ll p^n$. 
Hence time complexity is linear in size of input.

\subsubsection*{Space complexity:}
The input takes $O({p^{n}})$, which is unavoidable. 
Apart from that the only space requirement is to store the 
$(n-1) \times (n-1)$ matrix which takes $O(n^2)$. Hence space complexity is $O(n^2)$.

\section{Determination of the polynomial}
\label{FindingPolyn}
A natural continuation of the problem is to find the polynomial which corresponds to the given function. 
This can accomplished by merely giving the polynomials that correspond to the elements in the generating 
set. Improving upon Algorithm 1 we can obtain a solution of the system of linear equations, if it 
exists. Since the solution corresponds to the scalars in the linear combination of the generating elements, 
if we are equipped with the polynomials corresponding to the vectors $u_i$ defined in Lemma~\ref{LemmaUi} for $i=0, \ldots, n-1$, 
determining the polynomial of the given function becomes a trivial task. In this section we 
present the polynomials that correspond to the generating vectors.

\begin{proposition}
 The polynomial $(1-x^{\phi(p^n)})$ corresponds to the function $u_0$ defined in Lemma~\ref{LemmaU0} as 
\[
 u_0(x) =
  \begin{cases}
   0 & \text{if } p\nmid x \\
   1 & \text{if } p\:|\:x,
  \end{cases}
\] where $\phi(m)$ refers to Euler's totient function.
\end{proposition}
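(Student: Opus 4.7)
The plan is to verify the claimed polynomial identity directly on the two cases defining $u_0$, exploiting Euler's theorem on the unit group of $\Z_{p^n}$.

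First I would recall that $\phi(p^n) = p^{n-1}(p-1)$ and split into the two cases $p\nmid x$ and $p\mid x$. In the first case, $x$ is coprime to $p^n$, so $x$ is a unit in $\Z_{p^n}$, and Euler's theorem gives $x^{\phi(p^n)} \equiv 1 \pmod{p^n}$. Hence $1 - x^{\phi(p^n)} \equiv 0 \equiv u_0(x) \pmod{p^n}$, as required.

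The second case is where the small arithmetic work lies. Writing $x = pk$ with $k \in \Z_{p^n}$, one gets $x^{\phi(p^n)} = p^{\phi(p^n)} k^{\phi(p^n)}$, and I need the prefactor $p^{\phi(p^n)}$ to be $\equiv 0 \pmod{p^n}$. This reduces to the inequality $\phi(p^n) = p^{n-1}(p-1) \geq n$. I would verify this by noting that for any prime $p \geq 2$ and any $n \geq 1$ one has $p^{n-1}(p-1) \geq 2^{n-1} \geq n$ (the last inequality being a standard induction). Consequently $x^{\phi(p^n)} \equiv 0 \pmod{p^n}$ and $1 - x^{\phi(p^n)} \equiv 1 \equiv u_0(x)$. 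The edge case $x = 0$ is handled by the same formula since $0^{\phi(p^n)} = 0$.

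The main obstacle, if any, is purely a bookkeeping point: one must confirm the inequality $\phi(p^n) \geq n$ so that the $p$-adic valuation of $x^{\phi(p^n)}$ is large enough to vanish modulo $p^n$ whenever $p \mid x$. Once that bound is in place, the two cases slot together and the polynomial $1 - x^{\phi(p^n)}$ is seen to agree with $u_0$ on every $x \in \Z_{p^n}$, establishing the proposition.
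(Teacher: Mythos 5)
Your proof is correct and follows essentially the same route as the paper: Euler's theorem handles the units ($p\nmid x$), and the bound $\phi(p^n)\ge n$ forces $x^{\phi(p^n)}\equiv 0 \pmod{p^n}$ when $p\mid x$. The only difference is that you justify the inequality explicitly via $p^{n-1}(p-1)\ge 2^{n-1}\ge n$, which the paper merely asserts.
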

\begin{proof}
 From Euler's totient theorem we have
$$x^{\phi(m)} \equiv 1 \mbox{ mod } m, $$
for all $x$ such that $\mathrm{gcd}(x,m) =1$.

When $m =p^n$ we have
$$\phi(p^n)= p^{n} - p^{n-1},$$
$$x^{\phi(p^n)} \equiv 1 \mbox{ mod } p^n$$
if and only if $\mathrm{gcd}(x,p^n)=1$.

In other words we have $x^{\phi(p^n)} \equiv 1$ mod $p^n$ 
if $p \nmid x$ for all $ x \in \Z_{p^n}$. Also $\phi(p^n) > n$ for all 
$p \geq 2, n \geq 1$. Hence $(lp)^{\phi(p^n)} \equiv 0$ mod $p^n$, where $l \in \Z_{p^n}$, which 
means if $p\: | \:x$ then $x^{\phi(p^n)} \equiv 0$.

From these two observations we infer that the monomial $x^{\phi(p^n)}$ corresponds to the function 
\[
 x^{\phi(p^n)} =
  \begin{cases}
   1 & \text{if } p \nmid x\\
   0 & \text{if } p\:|\: x.
  \end{cases}
\]
Then the polynomial $(1-x^{\phi(p^n)}) \equiv (p^n -1)x^{\phi(p^n)} + 1$  corresponds to the function 
\[
 1-x^{\phi(p^n)} =
  \begin{cases}
   0 & \text{if } p\nmid x \\
   1 & \text{if } p\:|\:x 
  \end{cases}
\]
which is identical to the definition of $u_0$.

\end{proof}

It should be noted that many polynomials give rise to the function vector $u_0$. The polynomial 
mentioned above is just one of them. It is in fact possible to list all the polynomials which 
correspond to $u_0$ using the method given in \cite{kempner1921polynomials}.

Let $\mathfrak{u}_0$ denote the polynomial $1-x^{\phi(p^n)}$. Using $\mathfrak{u}_0$ one can easily 
construct the polynomials for all the generators of $\mathfrak{P}_{p^n}$. Each $u_i$ defined in 
Lemma~\ref{LemmaUi} as the function 
\[
 u_i(x)=	\left\{
			\begin{array}{ll}
			     0   & \mbox{ if } p\nmid x, \\
			     x^i & \mbox{ if } p \:|\: x
			\end{array}
\right.
\]
corresponds to 
the polynomial $\mathfrak{u}_i$ given as follows.
\[
 \mathfrak{u}_i =
  x^i\mathfrak{u}_0 =
\begin{cases}
  0 & \text{if } p\nmid x \\
  x^i & \text{if } p\:|\:x 
\end{cases}
=u_i.
\]

The cyclic shifts of $u_i$ are obtained by replacing $x$ by $x+j$ in each $\mathfrak{u}_i$. The polynomials 
corresponding to the generators are

\begin{align}
u_0  & \equiv 1-x^{\phi(p^n)}      \\
u_i  & \equiv x^i(1-x^{\phi(p^n)})     \\
u_i^{<j>} & \equiv (x+j)(1-(x+j)^{\phi(p^n)})
\end{align}
for $i=1, \ldots, n-1$ and $j=1, \ldots, p-1$.

Written explicitly the desired polynomials are \\
$ 1-x^{\phi(p^n)},  1-(x+1)^{\phi(p^n)}, \ldots,  1-(x+p-1)^{\phi(p^n)}, \\
x(1-x^{\phi(p^n)}), (x+1)(1-(x+1)^{\phi(p^n)}), \ldots, (x+p-1)(1-(x+p-1)^{\phi(p^n)}), \ldots,\\
x^{n-1}(1-x^{\phi(p^n)}), (x+1)^{n-1}(1-(x+1)^{\phi(p^n)}), \ldots, (x+p-1)^{n-1}(1-(x+p-1)^{\phi(p^n)}).
 $

\section{Polynomials in several variables}
\label{mutivariatecase}
The problem of determining whether a given function is polynomial can be extended to functions over several 
variables as well, \ie given a function $f : (\Z_{p^n})^m \longrightarrow \Z_{p^n}$, where $m$ is a positive integer, 
can we determine whether $f$ can be written as a polynomial in $m$ variables?
The characterization given in ~\eqref{Carlitzeqn} is extended to multivariate functions 
in \cite{carlitz1964functions}. As in the case of single variable the characterization is existential in nature. 
Some related work can be found in \cite{kaiser1987permutation}. We show 
that our characterization Theorem~\ref{Myresult} can be extended to multivariate functions.

Let $\mathfrak{F}_{p^n}^{(m)}$ denote the set of all functions from 
$(\Z_{p^n})^m$ to $\Z_{p^n}$. Let $\mathfrak{P}_{p^n}^{(m)}$ denote those 
functions which are polynomially representable.

The definition of cyclic shift in mulitvariate case is non-trivial, but 
follows closely the univariate case given in Definition~\ref{cyclicshiftdef}. 

\begin{definition}
Let $f \in \mathfrak{F}_{p^n}^{(m)}$. For $(j_1, \ldots, j_m) \in \Z^m$ we denote its cyclic shift
by $f^{<j_1, \ldots, j_m>}$ and define it as
$$
 f^{<j_1, \ldots, j_m>}(x_1, \ldots, x_m) = f(
 x_1 + j_1 \mbox{ mod } p^n, \ldots,
 x_m + j_m \mbox{ mod } p^n)
$$
for all $(x_1, \ldots, x_m) \in (\Z_{p^n})^m.$
\end{definition}

For functions involving several variables the result in \cite{carlitz1964functions} given in 
~\eqref{Carlitzeqn} takes the form: 
$f:  (\Z_{p^n})^m \longrightarrow \Z_{p^n}$ is polynomial if and only if there exists suitable 
functions $\Phi_{i_1,\ldots,i_m} : (\Z_{p^n})^m \longrightarrow \Z_{p^n}$ such that
$$ f(x_1+ ps_1, \ldots, x_m+ps_m)= 
\displaystyle\sum\limits_{i_1+\ldots+i_m <n} 
\Phi_{i_1,\ldots,i_m}(x_1,\ldots,x_m) (ps_1)^{i_1}\ldots(ps_m)^{i_m} \mbox{ mod }p^n. $$

Using the above result we define a generating set similar to the one defined earlier in Lemma~\ref{LemmaUi}.

\begin{lemma}
The function $u_{k_1, \ldots,k_m}: (\Z_{p^{n}})^m \longrightarrow \Z_{p^{n}}$ defined as
\begin{equation*}
\label{multiUieqn}
 u_{k_1,\ldots,k_m}(x_1,\ldots,x_m)=	
\left\{
			\begin{array}{ll}
			  x_1^{k_1}\ldots x_m^{k_m} & \mbox{ if } p \:|\: x_i \mbox{ for all }i=1,\ldots,m\\  
			  0   & \mbox{ if } p\nmid x_i \mbox{ for at least one }i=1,\ldots,m 
			     
			\end{array}
\right.
\end{equation*}
belongs to $ \mathfrak{P}_{p^{n}}^{(m)}$, where $0 \leq k_1, \ldots, k_m < n$.
\end{lemma}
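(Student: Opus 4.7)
The plan is to mimic the proof of Lemma~\ref{LemmaUi} in the univariate case, now invoking the multivariate form of Carlitz's characterization stated just above the lemma. Specifically, for a fixed tuple $(k_1,\ldots,k_m)$ I would exhibit explicit functions $\Phi_{i_1,\ldots,i_m}$ on $(\Z_{p^n})^m$ which witness that $u_{k_1,\ldots,k_m}$ satisfies the multivariate analogue of \eqref{Carlitzeqn}.

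Concretely, I would define, for $0 \leq i_j \leq k_j$ with $i_1+\cdots+i_m<n$,
\[
\Phi_{i_1,\ldots,i_m}(x_1,\ldots,x_m)=
\begin{cases}
\binom{k_1}{i_1}\cdots\binom{k_m}{i_m}\,x_1^{k_1-i_1}\cdots x_m^{k_m-i_m}, & \text{if } p\mid x_j \text{ for all } j,\\
0, & \text{otherwise,}
\end{cases}
\]
and set $\Phi_{i_1,\ldots,i_m}\equiv 0$ for the remaining index tuples. Then I would split into two cases. In the case where at least one $x_j$ is not divisible by $p$, we have $p\nmid x_j+ps_j$ for that same $j$, so the left-hand side $u_{k_1,\ldots,k_m}(x_1+ps_1,\ldots,x_m+ps_m)$ vanishes, and by construction every $\Phi_{i_1,\ldots,i_m}(x_1,\ldots,x_m)$ vanishes too, so both sides of the Carlitz identity are $0$.

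In the case where $p\mid x_j$ for every $j$, also $p\mid x_j+ps_j$ for every $j$, and the left-hand side equals $(x_1+ps_1)^{k_1}\cdots(x_m+ps_m)^{k_m}$. Expanding each factor by the binomial theorem and multiplying gives
\[
\sum_{0\le i_j\le k_j}\binom{k_1}{i_1}\cdots\binom{k_m}{i_m}\,x_1^{k_1-i_1}\cdots x_m^{k_m-i_m}(ps_1)^{i_1}\cdots(ps_m)^{i_m}.
\]
The key observation, which I would flag as the only mildly delicate point, is that the terms with $i_1+\cdots+i_m\ge n$ are killed modulo $p^n$ because they already carry a factor of $p^{i_1+\cdots+i_m}$. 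Thus modulo $p^n$ the sum reduces to the range $i_1+\cdots+i_m<n$ appearing in the multivariate Carlitz formula, and matches precisely $\sum \Phi_{i_1,\ldots,i_m}(x_1,\ldots,x_m)(ps_1)^{i_1}\cdots(ps_m)^{i_m}$ by our definition of the $\Phi$'s.

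The main obstacle is essentially bookkeeping: making sure that the index restriction $i_1+\cdots+i_m<n$ from Carlitz's theorem is compatible with the multinomial expansion, which is why I isolate the truncation step as the crux. Everything else is a direct generalization of Lemma~\ref{LemmaUi}, with the role of $\binom{k}{i}$ played by the product of binomial coefficients and the role of $p\mid x$ played by simultaneous divisibility $p\mid x_j$ for all $j$.
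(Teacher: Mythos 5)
Your argument is correct, but it takes a genuinely different route from the paper's. The paper proves this lemma by induction on $m$ and never touches the multivariate Carlitz criterion: it takes a polynomial $h(x_1,\ldots,x_{m-1})$ representing $u_{k_1,\ldots,k_{m-1}}$ (the inductive hypothesis), a polynomial $g(x_m)$ representing the univariate $u'_{k_m}$ from Lemma~\ref{LemmaUi}, and observes that the product $gh$, viewed in $\Z_{p^n}[x_1,\ldots,x_m]$, evaluates to exactly $u_{k_1,\ldots,k_m}$ because the function is literally the product of the two indicator-type functions. You instead generalize the \emph{proof} of Lemma~\ref{LemmaUi} rather than its \emph{statement}, exhibiting explicit witnesses $\Phi_{i_1,\ldots,i_m}$ for the multivariate form of \eqref{Carlitzeqn}; your two-case analysis is sound, and you correctly isolate the one genuinely new issue absent from the univariate case, namely that $k_1+\cdots+k_m$ may reach or exceed $n$ even though each $k_j<n$, so the multinomial expansion produces terms outside the index range $i_1+\cdots+i_m<n$, which you rightly kill using the factor $p^{i_1+\cdots+i_m}\equiv 0 \bmod p^n$. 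The trade-off: the paper's product argument is shorter, needs no bookkeeping, and hands you the representing polynomial $x_1^{k_1}\cdots x_m^{k_m}\prod_j\bigl(1-x_j^{\phi(p^n)}\bigr)$ for free (which Section~\ref{FindingPolyn} later exploits), whereas your approach makes the Taylor-type coefficient functions explicit, which is the data the decision algorithm actually manipulates, and stays uniform with the univariate treatment at the cost of the truncation step.
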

\begin{proof}
Proof is by induction on $m$. For $m$=1, the above statement is true by Lemma~\ref{LemmaUi}. 
Assume that the function
$u_{k_1, \ldots,k_{m-1}}: (\Z_{p^{n}})^{m-1} \longrightarrow \Z_{p^{n}}$ defined as
\begin{equation*}
 u_{k_1,\ldots,k_{m-1}}(x_1,\ldots,x_{m-1})=	
\left\{
			\begin{array}{ll}
			  x_1^{k_1}\ldots x_{m-1}^{k_{m-1}} & \mbox{ if } p \:|\: x_i \mbox{ for all }i=1,\ldots,m-1.\\  
			  0   & \mathrm{ otherwise, }
			\end{array}
\right.
\end{equation*}
is polynomially representable. Let $h \in \Z_{p^n}[x_1,\ldots,x_{m-1}]$ be the polynomial 
which when evaluated over $(\Z_{p^n})^{m-1}$ gives the function $u_{k_1, \ldots,k_{m-1}}$.

Consider the function $u'_{k_m}: \Z_{p^n} \longrightarrow \Z_{p^n}$ defined as 
\begin{equation*}
 u'_{k_m}(x)=	\left\{
			\begin{array}{ll}
			     0   & \mbox{ if } p\nmid x, \\
			     x^{k_m} & \mbox{ if } p \:|\: x.
			\end{array}
\right.
\end{equation*}
From Lemma~\ref{LemmaUi} we know that it is polynomially representable.  
Let $g \in \Z_{p^n}[x_m]$ be the polynomial that corresponds to the function $u'_{k_m}$.

Consider $h$ and $g$ as polynomials in $\Z_{p^n}[x_1, \ldots, x_m]$. Clearly $hg \in \Z_{p^n}[x_1, \ldots, x_m]$.
Let $f=gh$. As a function $f$ is defined as 
\[
 f(x_1,\ldots,x_m) =
\begin{cases}
x_1^{k_1}\ldots x_{m-1}^{k_{m-1}} \cdot x_m^{k_m} & \text{ if } p\:| \: x_i \text{ for all }i=1, \ldots,m-1 \text{ and } p\:| \:x_m\\
x_1^{k_1}\ldots x_{m-1}^{k_{m-1}} \cdot 0& \text{ if }  p\:| \: x_i \text{ for all }i=1, \ldots,m-1 \text{ and } p\nmid x_m\\
 0 \cdot x_m^{k_m} & \text{ if }  p\nmid x_i \text{ for some }i=1, \ldots,m-1 \text{ and } p\:|\: x_m\\
 0 \cdot 0   & \text{ if } p\nmid x_i \text{ for some }i=1, \ldots,m-1 \text{ and } p\nmid x_m.
  \end{cases}
\]
That is we have 
\[
 f(x_1,\ldots,x_m) =
\begin{cases}
x_1^{k_1}\ldots x_m^{k_m} & \text{ if } p\:| \: x_i \text{ for all }i=1, \ldots,m,\\
 0   & \text{ if } p\nmid x_i \text{ for some }i=1, \ldots,m,
  \end{cases} 
\]
which is identical to $u_{k_1,\ldots,k_m}(x_1,\ldots,x_m)$. 
Hence $u_{k_1,\ldots,k_m}(x_1,\ldots,x_m) \in \mathfrak{P}_{p^n}^{(m)}$.

\end{proof}

\begin{theorem}
 $f \in \mathfrak{P}_{p^{n}}^{(m)}$ if and only if
 $f \in \langle \langle u_{k_1,\ldots,k_m} : k_1 + \ldots + k_m <n \rangle \rangle$, 
where $(k_1, \ldots, k_m) \in \Z^n$ and  $u_{k_1,\ldots,k_m}$ is defined as above.
\end{theorem}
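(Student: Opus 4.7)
The plan is to mirror the univariate proof of Theorem~\ref{Myresult}, with the preceding lemma supplying the claim that each $u_{k_1,\ldots,k_m}$ is polynomial and the multivariate Carlitz identity playing the role that~\eqref{Carlitzeqn} did before.

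The reverse implication is the easier half. Each generator $u_{k_1,\ldots,k_m}$ is polynomial by the preceding lemma, and substituting $x_\ell \mapsto x_\ell + j_\ell$ in its representing polynomial shows that every cyclic shift $u_{k_1,\ldots,k_m}^{<j_1,\ldots,j_m>}$ is polynomial as well. The obvious multivariate analogues of Lemmas~\ref{ProposTwo} and~\ref{ProposThree} make $\mathfrak{P}_{p^n}^{(m)}$ a $\Z_{p^n}$-submodule of $\mathfrak{F}_{p^n}^{(m)}$, so any $\Z_{p^n}$-linear combination of these generators and their cyclic shifts lies in $\mathfrak{P}_{p^n}^{(m)}$.

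For the forward implication, I would decompose $f$ by the joint residues of its arguments modulo $p$. For each $(j_1,\ldots,j_m) \in \{0,1,\ldots,p-1\}^m$, let $v_{j_1,\ldots,j_m} \in \mathfrak{F}_{p^n}^{(m)}$ agree with $f$ on inputs with $x_\ell \equiv j_\ell \pmod p$ for every $\ell$ and vanish elsewhere; then $f = \sum_{(j_1,\ldots,j_m)} v_{j_1,\ldots,j_m}$, so it suffices to place each $v_{j_1,\ldots,j_m}$ in the generating submodule. The multivariate Carlitz identity provides functions $\Phi_{i_1,\ldots,i_m}$ such that
$$f(j_1+ps_1,\ldots,j_m+ps_m) = \sum_{i_1+\cdots+i_m<n} \Phi_{i_1,\ldots,i_m}(j_1,\ldots,j_m)\,(ps_1)^{i_1}\cdots(ps_m)^{i_m},$$
and my candidate expansion is
$$v_{j_1,\ldots,j_m} = \sum_{i_1+\cdots+i_m<n} \Phi_{i_1,\ldots,i_m}(j_1,\ldots,j_m)\,u_{i_1,\ldots,i_m}^{<-j_1,\ldots,-j_m>}.$$
To verify, I would evaluate both sides separately on the support and off the support of $v_{j_1,\ldots,j_m}$. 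On the support, writing $x_\ell = j_\ell + ps_\ell$, the shifted generator evaluates to $u_{i_1,\ldots,i_m}(ps_1,\ldots,ps_m) = (ps_1)^{i_1}\cdots(ps_m)^{i_m}$ since $p \mid ps_\ell$ for all $\ell$, so the right-hand side reproduces exactly Carlitz's formula. Off the support some $x_\ell - j_\ell \not\equiv 0 \pmod p$, so each shifted generator vanishes, as does $v_{j_1,\ldots,j_m}$. Summing over $(j_1,\ldots,j_m)$ recovers $f$ as a $\Z_{p^n}$-linear combination of the generators and their cyclic shifts.

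The only real obstacle is multi-index bookkeeping: choosing the direction and magnitude of the cyclic shifts so that the support of each shifted $u_{i_1,\ldots,i_m}$ aligns with the residue class of $v_{j_1,\ldots,j_m}$, and confirming that the range $i_1 + \cdots + i_m < n$ in the generating set matches the range appearing in the multivariate Carlitz identity. The substantive ring-theoretic content has already been settled by the preceding lemma and by Carlitz's multivariate result, so the argument is a direct generalization of the univariate case.
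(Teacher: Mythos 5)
Your proposal is correct and follows essentially the same route as the paper's proof: decompose $f$ by joint residue classes of its arguments modulo $p$, apply the multivariate Carlitz identity to each piece $v_{j_1,\ldots,j_m}$, and identify each resulting term as a scalar multiple $\Phi_{i_1,\ldots,i_m}(j_1,\ldots,j_m)\,u_{i_1,\ldots,i_m}^{<\cdot>}$ of a shifted generator. The only discrepancy is the shift direction --- you write $u^{<-j_1,\ldots,-j_m>}$ where the paper writes $u^{<j_1,\ldots,j_m>}$; under the paper's convention $f^{<j>}(i)=f(i+j)$ your sign is in fact the consistent choice, and since the theorem quantifies over all cyclic shifts this is immaterial.
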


\begin{proof}
The proof is similar to one given in Theorem~\ref{Myresult}. One implication is trivial. 
To prove that  $f$ is polynomial implies $f \in \langle \langle u_{k_1,\ldots,k_m}: k_1 + \ldots + k_m <n\rangle \rangle$ 
write $f$ as 
$$f= \displaystyle\sum\limits_{0 \leq j_1, \ldots, j_m < p} v_{j_1, \ldots,j_m}, $$ 
such that 
$v_{j_1, \ldots,j_m} : (\Z_{p^n})^m \longrightarrow \Z_{p^n}$ defined as 
\[
v_{j_1, \ldots,j_m}(a_1,\ldots,a_m) =
\begin{cases}
f(a_1, \ldots,a_m) & \text{ if } a_i \equiv j_i \text{ mod }p \text{ for all }i=1, \ldots,m\\
 0    & \text{ otherwise. }
  \end{cases}
\]

That is $f(a_1, \ldots, a_m)$ is placed in exactly one of the $p^m$ different $v_{j_1, \ldots,j_m}$ 
functions. We now show that  each $v_{j_1, \ldots,j_m} \in \mathfrak{P}_{p^n}^{(m)}$ for all 
$(j_1, \ldots, j_m) \in \Z^m$.

This can be written as 
\[
v_{j_1, \ldots,j_m}(a_1,\ldots,a_m)  \\
= 
\begin{cases}
\displaystyle\sum\limits_{k_1+\ldots + k_m < n} \Phi_{k_1,\ldots,k_m}(j_1,\ldots,j_m)
\displaystyle\prod\limits_{i=1}^{m} (ps_i)^{k_i}&  
 \text{ if } a_i = j_i + ps_i \\
      & \text{ for all }i=1, \ldots, m \\
 0    & \text{ otherwise. }
  \end{cases}
\]

Let $\eta_{k_1,\ldots,k_m} : (\Z_{p^n})^m \longrightarrow \Z_{p^n}$ such that\\
$$ v_{j_1, \ldots,j_m} = \sum_{k_1+\ldots + k_m < n}\eta_{k_1,\ldots,k_m}, $$
where 
$$
\eta_{k_1,\ldots,k_m}(a_1,\ldots,a_m)=
\begin{cases}
 \Phi_{k_1,\ldots,k_m}(j_1,\ldots,j_m)(ps_1)^{k_1}\ldots(ps_m)^{k_m}& 
 \text{ if } a_i = j_i + ps_i \\
      & \text{ for all }i=1, \ldots,m\\
 0    & \text{ otherwise. }
  \end{cases}
$$
for all $(a_1, \ldots,a_m)\in (\Z_{p^n})^m.$ We have 

$$\eta_{k_1,\ldots,k_m} = \Phi_{k_1,\ldots,k_m}(j_1,\ldots,j_m)\cdot  
 u_{k_1, \ldots, k_m}^{<j_1,\ldots,j_m>},$$
where $u_{k_1, \ldots, k_m}^{<j_1,\ldots,j_m>}$ denotes the $(j_1,\ldots,j_m)$ cyclic shift of 
$u_{k_1, \ldots, k_m}$. 
Hence each $\eta_{k_1,\ldots,k_m} \in \langle u_{k_1, \ldots, k_m}^{<j_1,\ldots,j_m>} : k_1 + \ldots + k_m <n \rangle$. 
In other words,
$ \eta_{k_1,\ldots,k_m} \in \langle \langle u_{k_1, \ldots, k_m}: k_1 + \ldots + k_m <n \rangle \rangle,$
which implies $v_{j_1, \ldots,j_m}$ and therefore 
$f \in \langle \langle u_{k_1, \ldots, k_m} : k_1 + \ldots + k_m <n \rangle \rangle$.

\end{proof}
Using the above result we can obtain an algorithm similar to Algorithm 1 
that determines whether the given function is polynomial or not. 
The complexity in multivariate case is $O((np)^m)$, which is linear in the size of the input.

Determination of the polynomial is extended to the multivariate case in a natural way. In the case of $m$ variables we know 
that $\mathfrak{P}_{p^n}^{(m)}$ is generated by $\{u_{k_1,\ldots,k_m} | \:k_1 + \ldots + k_m <n \}$. 
The function $u_{0,\ldots,0}$ is given by the polynomial 
$$(1-x_1^{\phi(p^n)})(1-x_2^{\phi(p^n)})\ldots (1-x_m^{\phi(p^n)}).$$

In general the function vector 
\[
u_{k_1,\ldots,k_m} =
  \begin{cases}
   x_1^{k_1}\ldots x_m^{k_m} & \text{if } p\:|\:x_i \text{ for all }i=1, \ldots, m \\
   0 & \text{otherwise, }
  \end{cases} 
\]
is given by the polynomial
$$x_1^{k_1}x_2^{k_2}\ldots x_m^{k_m}(1-x_1^{\phi(p^n)})(1-x_2^{\phi(p^n)})\ldots (1-x_m^{\phi(p^n)}). $$
This way it is possible to determine the polynomial that corresponds to the function in 
multivariate case as well.


\section{Concluding Remarks}
\label{Coda}

In this paper we considered the problem of polynomial representability 
of functions over $\Z_{p^n}$. A new characterization of polynomial functions 
is given that leads to a non-exhaustive algorithm which runs in linear time. 
We have also given a method to identify the polynomial that corresponds to the 
given function by providing the polynomials for the generating vectors. The 
results are extended to multivariate case as well.


\bibliographystyle{plain}     

\bibliography{ashwins}  

\begin{thebibliography}{10}

\bibitem{brawley1992functions}
J.V. Brawley and G.L. Mullen.
\newblock Functions and polynomials over {G}alois rings.
\newblock {\em Journal of Number Theory}, 41(2):156--166, 1992.

\bibitem{carlitz1964functions}
L~Carlitz.
\newblock Functions and polynomials ($ mod p^n $).
\newblock {\em Acta arithmetica}, 9(1):67--78, 1964.

\bibitem{carlitz1977functions}
L~Carlitz.
\newblock Functions and correspondences in a finite field.
\newblock {\em Bulletin of the American Mathematical Society}, 83(2):139--165,
  1977.

\bibitem{dickson1896analytic}
L.E. Dickson.
\newblock The analytic representation of substitutions on a power of a prime
  number of letters with a discussion of the linear group.
\newblock {\em The Annals of Mathematics}, 11(1/6):65--120, 1896.

\bibitem{hermite1863fonctions}
C.~Hermite.
\newblock Sur les fonctions de sept lettres.
\newblock {\em Comptes rendus de l'Académie des Sciences Paris}, 57:750--757,
  1863.

\bibitem{kaiser1987permutation}
HK~Kaiser and W.~N{\"o}bauer.
\newblock Permutation polynomials in several variables over residue class
  rings.
\newblock {\em Journal of the Australian Mathematical Society}, 43(2):171--175,
  1987.

\bibitem{keller1968counting}
G.~Keller and FR~Olson.
\newblock Counting polynomial functions (mod $p^n$).
\newblock {\em Duke Mathematical Journal}, 35(4):835--838, 1968.

\bibitem{kempner1921polynomials}
Aubrey~J Kempner.
\newblock Polynomials and their residue systems.
\newblock {\em Transactions of the American Mathematical Society},
  22(2):240--266, 1921.

\bibitem{lausch1973algebra}
H.~Lausch and W.~N{\"o}bauer.
\newblock {\em Algebra of polynomials}, volume~5.
\newblock North-Holland, 1973.

\bibitem{lidl1997finite}
R.~Lidl, H.~Niederreiter, and P.M. Cohn.
\newblock {\em Finite fields}, volume~20.
\newblock Cambridge Univ Pr, 1997.

\bibitem{mullen1984polynomial}
G.~Mullen and H.~Stevens.
\newblock Polynomial functions (mod m).
\newblock {\em Acta Mathematica Hungarica}, 44(3):237--241, 1984.

\bibitem{zhang2004polynomial}
Q.~Zhang.
\newblock Polynomial functions and permutation polynomials over some finite
  commutative rings.
\newblock {\em Journal of Number Theory}, 105(1):192--202, 2004.

\end{thebibliography}

\end{document}